\documentclass[conference]{IEEEtran}
\IEEEoverridecommandlockouts
\usepackage{cite}
\usepackage{overpic}
\usepackage{amsmath,amssymb,amsfonts}
\usepackage{graphicx}
\usepackage{textcomp}
\usepackage{xcolor}
\usepackage{float}
\usepackage{amsthm}
\usepackage{graphicx}
\usepackage{epstopdf}
\usepackage{subfigure}
\usepackage{adjustbox}
\usepackage{stfloats}
\usepackage{amsmath,bm}
\usepackage{amsfonts}
\usepackage{amssymb}
\usepackage{color}
\usepackage{multirow}
\usepackage{multicol}
\usepackage{soul,xcolor}
\usepackage{algorithm}
\usepackage{algorithmic}
\usepackage{url}

\theoremstyle{plain}

\newtheorem{lemma}{Lemma}

\newcommand{\vect}[1]{\mathbf{#1}}

\def\Htran{\mbox{\tiny $\mathrm{H}$}}
\def\Ttran{\mbox{\tiny $\mathrm{T}$}}

\begin{document}

\title{Joint Access and Fronthaul Resource Allocation for Cell-Free Massive MIMO with Wireless Fronthaul }

\author{ \IEEEauthorblockN{Ozan Alp Topal\IEEEauthorrefmark{1}, Özlem Tuğfe Demir\IEEEauthorrefmark{2}, Emil Björnson\IEEEauthorrefmark{1}, and Cicek Cavdar\IEEEauthorrefmark{1}}
\IEEEauthorblockA{ \IEEEauthorrefmark{1}{Department of Communication Systems, KTH Royal Institute of Technology, Stockholm, Sweden
		}   \\
        \IEEEauthorrefmark{2}{Department of Electrical and Electronics Engineering, Bilkent University, Ankara, Türkiye
		} \\
		\IEEEauthorblockA{E-mail: 
        \{oatopal, emilbjo, cavdar\}@kth.se,  ozlemtugfedemir@bilkent.edu.tr}
}

\thanks{ This work has been part of the Celtic-Next project RAI-6Green: Robust and AI Native 6G for Green Networks with project-id: C2023/1-9, partly supported by Swedish funding agency Vinnova. The work by \"O. T. Demir was supported by 2232-B International Fellowship for Early Stage Researchers Programme funded by the Scientific and Technological Research Council of T\"urkiye. E.~Bj\"ornson was supported by the IS24-0190 grant from SSF.}

}

\maketitle

\begin{abstract}
Wireless fronthaul is a key enabler of flexible and scalable cell-free massive MIMO systems, but its limited capacity poses significant challenges for maintaining high and uniform user performance.  In this work, we analyze the performance of a cell-free massive MIMO network with wireless fronthaul under realistic low physical layer functional splits. We propose a joint access and fronthaul resource allocation algorithm that maximizes the minimum user equipment (UE) spectral efficiency while satisfying fronthaul load constraints. Our analysis reveals that power allocation over the wireless fronthaul follows a modified water-filling structure, where the water level is jointly determined by the access and fronthaul channel gains. Furthermore, we show that severe fronthaul limitations not only reduce UE rates but also introduce spatial performance disparities depending on the cloud location. Finally, we demonstrate that split option 8 is impractical under wireless fronthaul constraints, underscoring the importance of dynamic fronthaul bit allocation to reduce fronthaul load and enable efficient system operation.
\end{abstract}

\section{Introduction}

Cell-free massive MIMO (multiple-input multiple-output) has emerged as a promising candidate for future mobile networks, thanks to its ability to deliver uniform service quality across user equipment (UEs). 
The performance gains of cell-free massive MIMO systems rely on the ability of distributed access points (APs) to perform coherent joint transmission and reception \cite{cfmMIMOOr}. In practical deployments with a large number of low-cost APs, this coordination is typically enabled by centralized processing at a cloud unit (or multiple cloud units), where baseband signals are jointly processed. Achieving such tight cooperation requires precise synchronization among APs, which in turn necessitates that key physical layer (PHY) functions be implemented centrally \cite{larsen_survey_2019}. Consequently, cell-free systems must adopt low-PHY functional splits, which require high-capacity fronthaul links \cite{deployment_IDS}.

In earlier studies, the fronthaul network is assumed to be based on optical fiber, providing high reliability and bandwidth \cite{cell-free-book}. However, fiber deployment also imposes significant infrastructure costs and reduces radio deployment flexibility. In contrast, wireless fronthaul has emerged as a promising alternative, offering a more flexible and cost-effective solution, but it limits the fronthaul capacity and coverage \cite{umurhan, wireless_f}. These prior works rely on simplified fronthaul assumptions, often relating fronthaul rate requirements to user data rates. In practice, supporting coherent joint processing requires low-PHY functional splits (split option $8$, $7.1$ and $7.2$ as detailed in \cite{energyjournal}), which impose $10$ times higher fronthaul rate than the assumed, and not only scale with the spectral bandwidth but also scale with the number of antennas at the APs (specifically in split options $8$ and $7.1$) \cite{icton}. This makes the fronthaul a key performance bottleneck, particularly in wireless implementations with limited capacity. The authors of \cite{AsilomarConf24,energyjournal, neetu} consider low-PHY splits in cell-free massive MIMO, but they focus on minimizing total power consumption, overlooking the fundamental performance bottlenecks imposed by wireless fronthaul capacity.

To address this gap, in this work, we develop a joint access and fronthaul resource allocation framework to provide uniform service quality in a downlink cell-free massive MIMO system with wireless fronthaul. The proposed algorithm maximizes the minimum spectral efficiency among the UEs through dynamically controlling the active number of AP antennas in the access and the power allocation in the fronthaul. For the simplified single-UE case, the optimal power allocation over the fronthaul follows a modified water-filling structure, where the allocation depends jointly on the access and fronthaul channel conditions. In numerical analysis, we demonstrate antenna activation patterns and minimum UE rate performance under wireless fronthaul, considering functional split options 8 and 7.1.  
        \vspace{-2mm}

\section{System Model}
\label{sec:system_model}
        \vspace{-1mm}

We consider the downlink of a cell-free massive MIMO system operating in time-division duplex (TDD) mode. The system consists of $L$ APs connected to the cloud via a high-frequency (mmWave) wireless fronthaul link, where APs and the cloud have line-of-sight (LOS) connectivity.

The APs serve $K$ single-antenna UEs over a mid-band frequency (sub-6 GHz) channel. The channel between APs and the cloud unit will be referred to as the fronthaul channel, while the channel between APs and UEs will be referred to as the access channel. For the access links, we assume uncorrelated Rayleigh fading channels as in \cite{interdonato2020local}, i.e., $\vect{h}_{l,k} \sim \mathcal{N}_{\mathbb{C}}( \boldsymbol{0}, \beta_{l,k} \vect{I}_{M^{\mathrm{ac}}})$.   Each AP is equipped with $M^\mathrm{ac}$ antennas for the access channel and $M^\mathrm{frh}$ antennas for the fronthaul channel. In the proposed system, we consider that each AP can activate or deactivate its access antennas based on the quality-of-service (QoS) requirements and fronthaul load limitations. Therefore, we denote the activated antennas at AP $l$  by $M_l \in \{0,\ldots,M^\mathrm{ac}\}$, where $M_l=0$ means that AP $l$ is deactivated. The bandwidth utilized in the access channel and the fronthaul channel are denoted as $B^{\mathrm{ac}}$ and $B^{\mathrm{frh}}$, respectively. We let $\tau_c$ denote the number of symbols in a TDD frame, where it consists of uplink training (with $\tau_p$ symbols for the training) and downlink data transmission (with $\tau_d=\tau_c-\tau_p$ symbols). Due to space limitations, we omit the explanations for the uplink training phase, but the implementation is detailed in  \cite[Chapter 5]{cell-free-book}. 
        \vspace{-1mm}
\subsection{Downlink Data Transmission in Access Channel}
        \vspace{-1mm}

After receiving the downlink message signals from the cloud, APs can simultaneously transmit the same message signal to enable a coherently enhanced signal at each receiving UE. The transmit signal from AP $l$ is given as 
     \vspace{-1mm}
\begin{equation}
    \vect{x}_l = \sum_{k=1}^K \sqrt{\rho_{l,k}} \vect{w}_{l,k} \varsigma_{k},      \vspace{-1mm}
\end{equation}
where $\rho_{l,k}$ is the  transmit power assigned for UE $k$ at AP $l$ and $\sum_{k=1}^{K}\rho_{l,k} \leq P_t, \, \forall l$. $\vect{w}_{l,k} \in \mathbb{C}^{M_l}$ is the precoding vector used by AP $l$ towards UE $k$ with $\mathbb{E}\left[\|\vect{w}_{l,k} \|^2\right] = 1$, where  $\|\vect{w}\|$ denotes Euclidean norm of a vector $\vect{w}$ and $\mathbb{E}\left[ \cdot \right]$ is the expectation operator. $\varsigma_{k}$ is the unit-power downlink data symbol of UE $k$,  $\mathbb{E}\left[|\varsigma_{k}|^2\right]=1$. We neglected the pilot contamination effect to focus on wireless fronthaul limitations.

We consider local protective partial zero-forcing (PPZF) as the downlink precoding scheme \cite{interdonato2020local}. 
PPZF can provide a better balance between array gain and interference cancellation than other distributed precoding schemes. In PPZF, each AP divides UEs into two distinct sets: strong-channel UEs, $\mathcal{S}_l$, and weak-channel UEs, $\mathcal{W}_l$, where $\mathcal{S}_l \bigcap \mathcal{W}_l = \varnothing$. Then, APs utilize zero forcing (ZF) precoding for the strong-channel UEs, and protective maximum ratio transmission (MRT) for the weak-channel UEs. 
The exact expressions of precoding vectors are also omitted due to space limitations, but can be found in \cite{interdonato2020local}. An achievable ergodic spectral efficiency (SE) for UE $k$ is  $\mathrm{SE}_k = \frac{\tau_c - \tau_p}{\tau_c} \log_2(1+ \mathrm{SINR}_k)$, where $\mathrm{SINR}_k$ is named as the effective signal-to-interference-plus-noise ratio (SINR) of UE $k$, and for the considered precoding scheme is given as
\vspace{-1mm}
\begin{equation}
    \operatorname{SINR}_k =\frac{\left(\sum_{l=1}^L \sqrt{\left(M_l-\tau_{\mathcal{S}_l}\right) \rho_{l, k} \gamma_{l, k}}\right)^2}{\sum_{t=1}^K \sum_{l=1}^L \rho_{l, t}\left(\beta_{l, k}-\delta_{l, k} \gamma_{l, k}\right)+\sigma^2},
    \label{eq:SINR}
\end{equation}
where $\sigma^2$ is the receiver noise variance. Here, $\beta_{l,k}$ and $\gamma_{l,k}$ represent the large-scale fading coefficient for the channel from AP $l$ to UE $k$ and the mean-square of the corresponding channel estimate, respectively. $\delta_{l,k}$ denotes the membership decision of UE $k$, where 
$\delta_{l,k} = 1$, if  $k \in \mathcal{S}_l$, and $\delta_{l,k} =  0$, if  $k \in \mathcal{W}_l$. $\tau_{\mathcal{S}_l} \leq \tau_{p}$ denotes the number of pilot signals for the strong-channel UEs at AP $l$.

\subsection{Fronthaul Rate Requirement with Functional Splits}
Cell-free massive MIMO relies on coherent joint transmission, which can only be realized by low-PHY functional split options \cite{larsen_survey_2019}.  Although higher split options allow for a lower fronthaul rate, they limit the phase synchronization between APs and prohibit coherent joint transmission. In this work, we focused on options $8$ and  $7.1$, both scaling the fronthaul load with the active number of antennas. Based on the chosen functional split, the required fronthaul rate for an AP changes. For option $8$, all IQ data need to be transmitted over fronthaul, where the rate requirement becomes 
\begin{equation}
    \bar{R}_{8} = 2 \Delta f N_{\mathrm{FFT}} N_{\mathrm{bits}} M,
\end{equation} where $M$ is the number of active antennas at the AP, $\Delta f$ is the subcarrier spacing, $N_{\mathrm{FFT}}$ is the fast Fourier transform (FFT) size, and $N_{\mathrm{bits}}$ is the number of quantization bits.
In the split option $7.1$, the filtering and FFT are realized at the AP-site, lowering the fronthaul rate requirement only with active data subcarriers:
\begin{equation}
    \bar{R}_{7.1} = 2 \Delta f N_{\mathrm{used}} N_{\mathrm{bits}} M,
\end{equation}
where $N_{\mathrm{used}} = \lfloor B^{\mathrm{ac}} / \Delta f \rfloor $ is the active data subcarriers. Since $N_{\mathrm{used}} \leq N_{\mathrm{FFT}}$, we can guarantee $\bar{R}_{7.1} \leq \bar{R}_{8}$. In both cases, the fronthaul rate scales with the number of active antenna elements at the AP, which will be observed to create a significant performance bottleneck in the numerical analysis.

\subsection{Wireless Fronthaul Signal Transmission}
\label{sec:wireless_fronth}
We consider space-division multiple access (SDMA) for the fronthaul channels.  The cloud unit is equipped with $M_c$ antennas driven by  $N_c$ RF chains with $N_c \ll M_c$ and $N_c \geq L$. The received fronthaul signal at AP $l$ is 
\vspace{-2mm}
\begin{equation}
    \vect{y}_{l} = \vect{G}_{l} \vect{F} \vect{W} \vect{s} + \vect{n}_{l}, \vspace{-2mm}
\end{equation}
where $ \vect{G}_{l} \in \mathbb{C}^{   M^{\mathrm{frh}} \times M_c }$ is the downlink fronthaul channel between the cloud and AP $l$, and $\vect{n}_{l}\sim \mathcal{N}_{\mathbb{C}}(\vect{0}, \sigma^2 \mathbf{I}_{M^{\mathrm{frh}}})$ is the additive white Gaussian noise (AWGN). Since the fronthaul links are assumed to be in LOS and the AP deployments are static, the fronthaul channel is perfectly known to the cloud.  $\vect{s}=[s_1,\ldots, s_L]^{\Ttran} \in \mathbb{C}$ denotes the downlink message signal of all APs. $\mathbb{E}[|s_l|^2] = \bar{p}_l$, where $\bar{p}_l$ denotes the fronthaul transmit signal power for the $l$th AP's message signal, where $\sum_{l=1}^{L} \bar{p}_l = P_f$, and $P_f$ is the fronthaul transmit power budget. 
$\vect{F} \in \mathbb{C}^{M_c \times N_c}$ and $\vect{W}\in \mathbb{C}^{N_c \times L}$ are the analog and digital precoding matrices. The APs also perform analog combining with the combining vector $\vect{v}_l \in \mathbb{C}^{M^{\mathrm{frh}} }$, which are chosen based on the angles of arrival from the cloud to the corresponding AP.   

We assume that the cloud selects the columns of $\vect{F}$ as array response vectors matched to the directions of the corresponding APs. By including the effects of analog beamforming into the channel, we can characterize equivalent channel representation as $(\vect{g}^{\mathrm{eq}}_l)^{\Htran} = \vect{v}_l^{\Htran} \vect{G}_{l} \vect{F} $, and $\vect{\Bar{G}} = [\vect{g}^{\mathrm{eq}}_1, \ldots, \vect{g}^{\mathrm{eq}}_{L}]^{\Htran}$. Applying ZF precoding at the cloud, we obtain the achievable data rate of AP $l$ as \cite[Ch. 6]{bjornson2024introduction}
\begin{equation}
R^f_l = B^{\mathrm{frh}}\log_2 \left(1+ G_{l} \bar{p}_l \right),
\end{equation}
where $G_{l}$ is equal to $1/(\sigma^2\left[(\vect{\Bar{G}}\vect{\Bar{G}}^{\Htran})^{-1}\right]_{l,l})$.  

To provide connectivity through AP $l$, the achievable data rate at the fronthaul should satisfy 
\begin{equation}
 \bar{R}_{\mathcal{X}} =   O_{\mathcal{X}} M_l \leq R^f_l,
\end{equation}
where $\mathcal{X} \in \{8, 7.1 \}$, $O_{8} = 2 \Delta f N_{\mathrm{FFT}} N_{\mathrm{bits}}$ and $O_{7.1} = 2 \Delta f N_{\mathrm{used}} N_{\mathrm{bits}}$.  This demonstrates that the number of active antennas that an AP can use in the access link is limited by its fronthaul data rate. 
        \vspace{-2mm}
\section{Joint Access and Fronthaul Resource Allocation}
 \vspace{-2mm}
The allocated transmit powers for APs on the fronthaul and the activated access antennas are connected through the wireless fronthaul load constraints. Therefore, we propose joint fronthaul power and access antenna allocation algorithms in this section. We first consider the single-UE case to simplify the problem structure and build intuition for the antenna activation rule under wireless fronthaul limitations. Later, we propose a max-min fair resource allocation algorithm for multiple UEs.
        \vspace{-1mm}
\subsection{Single-UE Case}
        \vspace{-1mm}

In this special case, we assume a single UE is served in a given time/frequency block, with perfect channel estimation. The UE index $k$ is removed from the definitions in Section \ref{sec:system_model}, and  the received signal at the UE can be expressed as 
\vspace{-1mm}
\begin{equation}
    r = \sum_{l=1}^{L} \sqrt{\rho_l} \vect{h}^{\Ttran}_{l}\vect{w}_{l} \varsigma + n, \vspace{-1mm}
\end{equation}
where $n \sim \mathcal{N}_{\mathbb{C}}(0, \sigma^2)$ is the AWGN. The optimal precoding for this case is the distributed maximum ratio transmission (MRT), where $\vect{w}_{l} = \vect{h}^{\dag}_{l} / \|   \vect{h}_{l} \|$ and $\rho_l=P_t, \forall l$. Here, $P_t$ is the per-AP power limit. An upper bound on the UE's ergodic data rate is obtained by Jensen's inequality:
\vspace{-1mm}
\begin{equation}
    R \leq B^{\mathrm{ac}} \log_2\left(1+   \gamma \mathbb{E}\left[\left(\sum_{l=1}^{L} \|\vect{h}_{l}\| \right)^2\right]  \right),  \vspace{-1mm}
\end{equation}
where $\gamma = P_t/\sigma^2$. Using large-array approximation, the effective downlink SNR of the UE can be approximated as\footnote{When the channels are pure LOS, the SNR given \eqref{eq:snr} is the exact one. }
\begin{equation} \vspace{-1mm}
    \mathrm{SNR} \approx  \gamma \left(\sum_{l=1}^{L} \sqrt{\beta_{l}  M_l } \right)^2,\label{eq:snr} \vspace{-1mm}
\end{equation}
where $R \approx B^{\mathrm{ac}} \log_2(1+ \mathrm{SNR})$. 
The active number of antennas is an integer variable that determines the effective SNR and is limited by the fronthaul. Defining them as integer variables would make the optimization problem non-convex and combinatorial. To obtain an efficient solution, we first relax them into a continuous variable: $M_l \in [0, M^{\mathrm{ac}}]$. After this relaxation, we can derive closed-form conditions and a water-filling–type algorithm that activates additional antennas based on their relative proximity to the UE and the cloud. 
We define  $\hat{M}_l = \sqrt{M_l}, \hat{\beta}_l =  \sqrt{\gamma\beta_l}, \, \forall l$. The maximum downlink rate under wireless fronthaul limitation is obtained by solving the following problem:
\vspace{-1mm}
\begin{subequations} \label{eq:maxmin_opt1_singleUE:problem}
\begin{align}
 & \underset{\{\hat{\vect{M}}, \bar{\vect{p}} \}}{\text{maximize}} \quad \sum_{l=1}^{L}  \hat{\beta}_l  \hat{M}_l \label{eq:maxmin_opt1_singleUE:objective} \\ & \textrm{subject to} \nonumber \\  &  \hat{M}^2_l  Q_{\mathcal{X}} \leq   \log_2\left( 1 + {G_{l} \bar{p}_l}  \right)  , \quad \forall l ,\label{eq:maxmin_opt1_singleUE:wireless_fronthaul_rate} \\ &
\sum_{l = 1}^{L} \bar{p}_l \leq P_f ,\label{eq:maxmin_opt1_singleUE:wireless_fronthaul_power} 
\end{align}
\end{subequations}
where $\hat{\vect{M}}\! \!=\!\! [\hat{M}_1, \ldots, \hat{M}_L]^{\Ttran}$ and $Q_{\mathcal{X}} = O_{\mathcal{X}} / B^{\mathrm{frh}}$. Note that we replaced $\left(\sum_{l=1}^{L}  \hat{\beta}_l  \hat{M}_l \right)^2$ in effective SNR with $\left(\sum_{l=1}^{L}  \hat{\beta}_l  \hat{M}_l \right)$ in the objective. Since $\hat{\beta}_l$ and $\hat{M}_l$ are nonnegative real numbers, these two objectives are equivalent to one another. 
\vspace{-2mm}
\begin{lemma}
    At the optimal solution of \eqref{eq:maxmin_opt1_singleUE:problem}, the constraint \eqref{eq:maxmin_opt1_singleUE:wireless_fronthaul_rate} holds with equality, i.e., 
    $\hat{M}^*_l  Q_{\mathcal{X}} = \sqrt{\log_2\left( 1 + {G_{l} \bar{p}^*_l}  \right)}, \, \forall l$. 
    \label{lemma1}
\end{lemma}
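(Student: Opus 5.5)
The plan is a standard contradiction argument: if the fronthaul-rate constraint \eqref{eq:maxmin_opt1_singleUE:wireless_fronthaul_rate} were slack for some AP, we would enlarge that AP's antenna variable and strictly improve the objective. Take an optimal pair $(\hat{\vect{M}}^*,\bar{\vect{p}}^*)$ of \eqref{eq:maxmin_opt1_singleUE:problem}. Suppose, for contradiction, that for some index $l$ we have $(\hat{M}^*_l)^2 Q_{\mathcal{X}} < \log_2(1+G_l\bar{p}^*_l)$.

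Define a new point that differs only in the $l$th coordinate, with everything else (in particular $\bar{\vect{p}}^*$) unchanged:
\begin{equation*}
\tilde{M}_l=\sqrt{\log_2(1+G_l\bar{p}^*_l)/Q_{\mathcal{X}}} .
\end{equation*}
Since the right-hand side exceeds $\hat{M}^*_l\ge 0$, we get $\tilde{M}_l>\hat{M}^*_l$. The key observation is that $\hat{M}_l$ appears only in the $l$th constraint \eqref{eq:maxmin_opt1_singleUE:wireless_fronthaul_rate}. Hence the other rate constraints and the power constraint \eqref{eq:maxmin_opt1_singleUE:wireless_fronthaul_power} are untouched, and the $l$th constraint now holds with equality, so the new point is feasible. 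The objective changes by $\hat{\beta}_l(\tilde{M}_l-\hat{M}^*_l)$. This is strictly positive whenever $\hat{\beta}_l=\sqrt{\gamma\beta_l}>0$, which holds since large-scale fading coefficients are positive. That contradicts optimality, so every constraint is tight, i.e.,
\begin{equation*}
(\hat{M}^*_l)^2Q_{\mathcal{X}}=\log_2(1+G_l\bar{p}^*_l)\quad\forall l .
\end{equation*}
Equivalently, $\hat{M}^*_l\sqrt{Q_{\mathcal{X}}}=\sqrt{\log_2(1+G_l\bar{p}^*_l)}$, which is how I would read the displayed identity in Lemma~\ref{lemma1}.

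The main obstacle is not the perturbation itself but the degenerate and boundary cases. If some $\beta_l=0$ were allowed, $\hat{M}_l$ would not affect the objective and tightness would fail to be forced. In that case I would only claim that an optimal solution satisfying equality exists, obtained by applying the same replacement at no cost. If the relaxation box $M_l\le M^{\mathrm{ac}}$ were kept, raising $\hat{M}_l$ could hit $\sqrt{M^{\mathrm{ac}}}$ first. Problem \eqref{eq:maxmin_opt1_singleUE:problem} as stated has no such box, so I would not rely on it. If the box were included, I would instead lower $\bar{p}_l$ to restore equality and note that the freed power can be reassigned without loss.

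As a by-product, I would remark that the same monotonicity argument (with $\log_2(1+G_l\bar{p}_l)$ increasing in $\bar{p}_l$) shows the power budget \eqref{eq:maxmin_opt1_singleUE:wireless_fronthaul_power} is also tight at the optimum. This sets up the water-filling structure that follows.
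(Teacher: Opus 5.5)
Your proof is correct and follows the paper's argument. The paper also notes that the objective is strictly increasing in each $\hat{M}_l$, so a slack constraint \eqref{eq:maxmin_opt1_singleUE:wireless_fronthaul_rate} would let $\hat{M}_l$ be increased, contradicting optimality; you additionally make explicit the assumption $\hat{\beta}_l>0$ and the fact that $\hat{M}_l$ enters only its own constraint, and your reading of the identity as $\hat{M}^*_l\sqrt{Q_{\mathcal{X}}}=\sqrt{\log_2(1+G_l\bar{p}^*_l)}$ is the one the constraint as written actually implies.
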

\vspace{-3mm}
\begin{proof}
  The feasible set of \eqref{eq:maxmin_opt1_singleUE:problem} is nonempty,
    and the objective is strictly increasing in each $\hat{M}_l$.
    Hence, at optimality, every constraint
    \eqref{eq:maxmin_opt1_singleUE:wireless_fronthaul_rate} must be active;
    otherwise, $\hat{M}_l$ could be increased, contradicting optimality.
    \vspace{-2mm}
\end{proof}

Using Lemma \ref{lemma1}, we can rewrite  \eqref{eq:maxmin_opt1_singleUE:problem} only in terms of $\bar{\vect{p}}$: 
\begin{subequations} \label{eq:maxmin_opt2_singleUE:problem}
\begin{align}
 & \underset{\{ \bar{\vect{p}} \}}{\text{maximize}} \quad \sum_{l=1}^{L}  \hat{\beta}_l q_{\mathcal{X}}  \sqrt{\log_2\left( 1 + {G_{l} \bar{p}_l}  \right)} \label{eq:maxmin_opt2_singleUE:objective} \\ & \textrm{subject to} \quad \sum_{l = 1}^{L} \bar{p}_l \leq P_f,\label{eq:maxmin_opt2_singleUE:wireless_fronthaul_power} 
\end{align}
\end{subequations}
where  $q_{\mathcal{X}} = Q^{-1}_{\mathcal{X}}$. The following lemma provides the optimal solution of \eqref{eq:maxmin_opt2_singleUE:problem}.

\vspace{-1mm}
\begin{lemma}
    The optimal power allocation $\bar{p}^*_l$ for problem
    \eqref{eq:maxmin_opt2_singleUE:problem} is 
   \begin{equation}
        \bar{p}^*_l = \max\left\{0,\,
            \frac{\sqrt{2}\,\hat{A}_l}
                 {\lambda^*\sqrt{W\!\left(2G^2_l\hat{A}_l^2/{\lambda^*}^2\right)}}
            - \frac{1}{G_l}
        \right\}, \quad \forall l,
        \label{eq:optimal_power}
    \end{equation}
    where
    \begin{equation}
        \hat{A}_l = \frac{\sqrt{q_{\mathcal{X}}\ln 2}\; \,\hat{\beta}_l}{2}, \quad \forall l.
    \end{equation}
    $W(\cdot)$ is the standard form of Lambert-W function, and $\lambda^*$ is the optimal
    dual variable obtained via bisection, such that
    $\sum_{l=1}^{L} \bar{p}^*_l = P_f$.
    \label{lemma:water_filling}
\end{lemma}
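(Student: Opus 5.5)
The plan is to treat \eqref{eq:maxmin_opt2_singleUE:problem} as a separable concave maximization over a simplex-type set, write its KKT conditions, and solve the per-AP stationarity equation in closed form using the Lambert-W function.

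\emph{Step 1 (convexity).} Write the objective as $\sum_l f_l(\bar p_l)$ with $f_l(p)=c_l\sqrt{\log_2(1+G_l p)}$ and $c_l=\hat\beta_l q_{\mathcal X}\ge 0$. The map $p\mapsto\log_2(1+G_lp)$ is concave and nonnegative on $p\ge 0$. The map $\sqrt{\cdot}$ is concave and nondecreasing. Their composition $f_l$ is therefore concave, so the problem is a concave maximization under a single linear constraint. Slater's condition holds (take $\bar p_l=P_f/(2L)$), so the KKT conditions are necessary and sufficient. The objective is strictly increasing in every $\bar p_l$ with $\hat\beta_l>0$, so the budget constraint is active, $\sum_l\bar p^*_l=P_f$, and its multiplier satisfies $\lambda^*>0$.

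\emph{Step 2 (stationarity).} With the Lagrangian $\sum_l f_l(\bar p_l)-\lambda(\sum_l\bar p_l-P_f)$ and the constraints $\bar p_l\ge0$, the KKT conditions read $f_l'(\bar p_l)\le\lambda$, with equality whenever $\bar p_l>0$. Substitute $x=1+G_l\bar p_l$ and $u=\ln x$. Then
\begin{equation*}
f_l'(\bar p_l)=\frac{c_l G_l}{2\sqrt{\ln 2}\,x\sqrt{u}},
\end{equation*}
and the equality becomes $x\sqrt{u}=K_l/\lambda$ with $K_l=c_lG_l/(2\sqrt{\ln2})$. Squaring and using $x=e^u$ gives $u e^{2u}=K_l^2/\lambda^2$, that is, $2u\,e^{2u}=2K_l^2/\lambda^2$. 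Since the argument is positive, the principal branch gives $2u=W(2K_l^2/\lambda^2)$.

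\emph{Step 3 (back-substitution).} Use $x=K_l/(\lambda\sqrt u)=\sqrt2K_l/(\lambda\sqrt{W(\cdot)})$ and $\bar p_l=(x-1)/G_l$. Define $\hat A_l=K_l/G_l$, so that $2K_l^2/\lambda^2=2G_l^2\hat A_l^2/\lambda^2$. This yields exactly the form \eqref{eq:optimal_power}. The constant inside $\hat A_l$ must be reconciled with the paper's normalization of $q_{\mathcal X}$ and of the $\log_2$ versus $\ln$ factors; this bookkeeping is the step most prone to error. Any mismatch only rescales $\hat A_l$ and does not change the structure of the formula. The $\max\{0,\cdot\}$ covers APs whose unconstrained stationary point would be negative, following the complementary-slackness case. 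Because $f_l'(p)\to\infty$ as $p\to0^+$, this can occur only when $\hat\beta_l=0$; I would remark on this as well.

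\emph{Step 4 (computing $\lambda^*$).} The map $z\mapsto \sqrt2 a/(\lambda\sqrt{W(2a^2G^2/\lambda^2)})$ is just $x(\lambda)$, the inverse of the strictly decreasing derivative $f_l'$. Hence each $\bar p^*_l(\lambda)$ is continuous and nonincreasing in $\lambda$, and it tends to $\infty$ as $\lambda\to0$ and to $0$ as $\lambda\to\infty$. So $\sum_l\bar p^*_l(\lambda)=P_f$ has a unique root, which bisection finds. I expect the main obstacle to be handling the Lambert-W inversion cleanly: choosing the branch, and verifying this monotonicity, which I would argue through $f_l'$ being strictly decreasing rather than through properties of $W$ directly.
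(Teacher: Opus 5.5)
Your proposal is correct and follows the paper's route: the paper's proof only cites Lagrangian stationarity and the KKT conditions, and your inversion of $x\sqrt{\ln x}=K_l/\lambda$ via the principal branch of $W$, the back-substitution and the monotonicity argument for bisection all check out. Your remark that the $\max\{0,\cdot\}$ is inactive whenever $\hat\beta_l>0$ is also right, since the expression equals $(e^{W(y)/2}-1)/G_l>0$ with $y$ the argument of $W$. The constant you flagged is harmless: $\bar p^*_l$ depends on $\hat A_l$ only through $\hat A_l/\lambda^*$, so your $K_l/G_l$, which differs from the paper's $\hat A_l$ by the $l$-independent factor $\sqrt{q_{\mathcal{X}}}/\ln 2$, is absorbed into $\lambda^*$ (fixed by $\sum_l\bar p^*_l=P_f$) and gives exactly the same allocation.
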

\vspace{-3mm}
\begin{proof}
The proof is given in the journal version of this paper. It follows from taking the derivative of the Lagrangian function and obtaining the KKT conditions. 
\vspace{-2mm}
\end{proof}
      \vspace{-1mm}
The solution in Lemma \ref{lemma:water_filling} demonstrates a water-filling solution, where the water level is scaled by a weighted Lambert-W function.

      \vspace{-1mm}
\subsection{Multi-UE Case}
        \vspace{-1mm}
\label{sec:ratemax_wireless}
In this section, we will introduce an optimization problem that maximizes the minimum UE SINR under wireless fronthaul limitations. This will allow us to analyze how much wireless fronthaul is limiting the performance. Under split options 7.1 and 8, this problem can be given by

\begin{subequations} \label{eq:WFmaxmin0:problem}
\begin{align}
 & \underset{\{\vect{M}, \boldsymbol{\rho}, \bar{\vect{p}}, \boldsymbol{\alpha}, \upsilon \}}{\text{maximize}} \quad  \upsilon \label{eq:WFmaxmin0:objective} \\
     & \textrm{subject to} \nonumber \\ &\frac{\left(\sum_{l=1}^L \sqrt{\left(M_l-\tau_{\mathcal{S}_l}\right) \alpha_l \rho_{l, k} \gamma_{l, k}}\right)^2}{\sum_{t=1}^K \sum_{l=1}^L \rho_{l, t}\left(\beta_{l, k}-\delta_{l, k} \gamma_{l, k}\right)+\sigma^2}\geq \upsilon, \forall k ,\label{eq:WFmaxmin0:SINR_constraint} \\
 & Q_{\mathcal{X}} M_l \alpha_l \leq   \log_2\left( 1 + {G_{l} \bar{p}_l}  \right)    , \quad \forall l ,\label{eq:WFmaxmin0:wireless_fronthaul_rate} 
 \\&
\sum_{l = 1}^{L} \bar{p}_l \leq P_f, \quad  \forall i  ,\label{eq:WFmaxmin0:wireless_fronthaul_power} 
\\& 
 \sum_{k=1}^{K} \rho_{l,k} \leq  P_t  \alpha_l, \quad \forall l,\label{eq:WFmaxmin0:access_power_limit} \\&
M_l \in \{0, \tau_{\mathcal{S}_l},\ldots,M^{\mathrm{ac}}\}, \quad \forall l, \label{eq:WFmaxmin0:integer}  \\&
\alpha_l \in  \{ 0, 1 \} , \quad \forall l ,\label{eq:WFmaxmin0:binary}
\end{align}
\end{subequations}
where $\boldsymbol{\rho} \in \mathbb{R}_{+}^{L\times K}$, and $\rho_{l,k}$ is the $l$th row and $k$th column element of $\boldsymbol{\rho}$. $\boldsymbol{\alpha}=[\alpha_1, \ldots, \alpha_L]^{\Ttran}$ is the AP activation vector, where $\alpha_l$ denotes the activation of AP $l$, ensuring that if AP cannot afford activating more than $\tau_{\mathcal{S}_l}$ antennas, \eqref{eq:WFmaxmin0:wireless_fronthaul_rate} is satisfied by deactivating the radio.   The objective function, \eqref{eq:WFmaxmin0:objective}, maximizes the minimum effective SINR of the UEs. \eqref{eq:WFmaxmin0:SINR_constraint} ensures the effective SINR at UE $k$ is higher than or equal to the threshold  $\upsilon$. \eqref{eq:WFmaxmin0:wireless_fronthaul_rate} ensures that the fronthaul rate for AP $l$ is higher than or equal to the required fronthaul rate.  \eqref{eq:WFmaxmin0:wireless_fronthaul_power} and \eqref{eq:WFmaxmin0:access_power_limit} limit the transmit power in the fronthaul and access links, respectively. \eqref{eq:WFmaxmin0:integer} ensures that the number of active antennas at AP $l$, $M_l$, is an integer variable either between $\tau_{\mathcal{S}_l}$ and the deployed number of antennas at AP $l$ or equal to zero, denoting deactivation. 
\eqref{eq:WFmaxmin0:problem} is a non-convex problem with a combinatorial nature due to the integer variables and the non-convex structure of the effective SINR.

As in the single-UE case, we start by relaxing the integer constraints on the active number of antennas by replacing \eqref{eq:WFmaxmin0:integer} with $0 \leq \tilde{M}_l   \leq  M^{\mathrm{ac}} - \tau_{\mathcal{S}_l} \in \mathbb{R}_+$. We will later recover $M_l$ by  rounding $\tilde{M}_l + \tau_{\mathcal{S}_l}$.

\begin{lemma}
   After using $\tilde{M}_l$, an efficient solution for  \eqref{eq:WFmaxmin0:problem} can be obtained by iteratively solving the following convex problem.
   \begin{subequations} \label{eq:WFmaxmin_opt1:problem}
\begin{align}
 & \underset{\{\tilde{\vect{M}}, \bar{\boldsymbol{\rho}}, \vect{z}, \vect{u}, \vect{v}, \vect{s}, \bar{\vect{p}}, \upsilon \}}{\text{maximize}} \quad \upsilon  - \Omega \sum^{L}_{l=1} s_l \label{eq:maxmin_opt1:objective} \\ & \quad \,\, \textrm{subject to} \quad  \eqref{eq:WFmaxmin0:wireless_fronthaul_power}, \eqref{eq:WFmaxmin0:access_power_limit},  \nonumber \\ &
\bar{\vect{b}}^{\Ttran}_k {\bar{\boldsymbol{\rho}}} +1 \leq  
\nabla {f^{(c)}_k}^{\Ttran}  \begin{bmatrix} \vect{z}^{(c+1)}_k -  \vect{z}^{(c)}_k \\ \upsilon^{(c+1)}  -  \upsilon^{(c)} \end{bmatrix}   +  {f^{(c)}_k},  \forall k, \\ &
\| \tilde{M}_l, \sqrt{2} v_l, 1+G_l \bar{p}_l \| \leq 1+G_l \bar{p}_l + \tilde{M}_l, \quad \forall l, \\ &
      u_l \leq 2v^{(c)}_l v^{(c+1)}_l - {v_l^{(c)}}^2 + s_l, \quad \forall l, \\ &
          Q'_{\mathcal{X}}\tilde{M}^2_l + Q'_{\mathcal{X}}\tilde{M}_l\tau_{\mathcal{S}_l} + \leq u_l \ln \left({u_l}/{\tilde{M}_l} \right), \quad \forall l, 
    \label{eq:CCP-FPP} 
\end{align}
\end{subequations}
where $\vect{z}_k = [ z_{1,k}, \ldots, z_{L,k}]^{\Ttran}$,  $ \boldsymbol{\bar{\gamma}}_k = [ \sqrt{\gamma_{1,k}}, \ldots, \sqrt{\gamma_{L,k}}]^{\Ttran}$, $\bar{\boldsymbol{\rho}} = [\sum_k \rho_{1k}, \ldots, \sum_k \rho_{Lk} ]^{\Ttran} $, $\bar{\vect{b}}_k = [\bar{b}_{1k}, \ldots, \bar{b}_{Lk}]^{\Ttran}$, $Q'_{\mathcal{X}} = Q_{\mathcal{X}} \ln(2)$,  $\bar{b}_{lk} = \left(\beta_{l, k}-\delta_{l, k} \gamma_{l, k}\right)$. $\Omega$ is the penalty coefficient to push feasibility parameters to zero. $ f_k= f(\vect{z}_k, \upsilon)  = \frac{  | \boldsymbol{\bar{\gamma}}_k^{\Ttran} \vect{z}_k |^2}{\upsilon } $, and $ \nabla {f^{(c)}_k} = \left[  \frac{ 2  \boldsymbol{\bar{\gamma}}_k \boldsymbol{\bar{\gamma}}_k^{\Ttran} \vect{z}^{(c)}_k     }{ \upsilon^{(c)} },  \frac{-  |\boldsymbol{\bar{\gamma}}_k^{\Ttran} \vect{z}^{(c)}_k|^2   }{ {\upsilon^{(c)}}^2 }  \right]^{\Ttran}$ is the gradient of $f^{(c)}_k$, where $c$ denotes the iteration index that will be used later. $s_l \geq 0, \, \forall l$ are slack variables, guaranteeing \eqref{eq:CCP-FPP} feasibility for any starting point. $\vect{u} = [u_1, \ldots, u_L]^{\Ttran} \in \mathbb{R}_{+}^L$ and $\vect{v} = [v_1, \ldots, v_L]^{\Ttran} \in \mathbb{R}_{+}^L$, are auxiliary  variables.
\end{lemma}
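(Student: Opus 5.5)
The plan is to turn \eqref{eq:WFmaxmin0:problem} into a difference-of-convex program in four steps, and then treat \eqref{eq:WFmaxmin_opt1:problem} as one convex-concave procedure (CCP) step with slacks.

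\textbf{Step 1: remove the binary variables.} I will set $\alpha_l=1$ and let $\tilde{M}_l=0$ represent deactivation. An AP with $\tilde{M}_l=0$ contributes nothing to the numerator of \eqref{eq:WFmaxmin0:SINR_constraint}. Driving its $\rho_{l,k}$ to zero then only reduces interference, so nothing is lost by this choice.

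\textbf{Step 2: convexify the SINR constraint.} I will substitute $z_{l,k}=\sqrt{\tilde{M}_l\rho_{l,k}}$, read as a constraint $z_{l,k}^2\le \tilde{M}_l\rho_{l,k}$. This is a rotated second-order cone and is tight at optimality, because the objective is increasing in $z_{l,k}$. After normalizing $\sigma^2=1$ (or absorbing $\sigma^2$ into $\bar{\vect{b}}_k$), \eqref{eq:WFmaxmin0:SINR_constraint} becomes $\bar{\vect{b}}_k^{\Ttran}\bar{\boldsymbol{\rho}}+1\le f(\vect{z}_k,\upsilon)$. Here $f=|\boldsymbol{\bar\gamma}_k^{\Ttran}\vect z_k|^2/\upsilon$ is a quadratic-over-linear function, hence jointly convex for $\upsilon>0$. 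The left-hand side is affine, so the constraint is convex $\le$ convex, which is exactly the DC form CCP handles. Replacing $f$ by its first-order Taylor expansion at $(\vect z_k^{(c)},\upsilon^{(c)})$ gives a global under-estimator. Any point satisfying the linearized constraint therefore satisfies the original one, and the gradient is the one displayed in the statement.

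\textbf{Step 3: reformulate the fronthaul constraint.} Write $M_l=\tilde{M}_l+\tau_{\mathcal{S}_l}$. Then \eqref{eq:WFmaxmin0:wireless_fronthaul_rate} reads $Q'_{\mathcal{X}}(\tilde M_l+\tau_{\mathcal S_l})\le \ln(1+G_l\bar p_l)$. I will multiply both sides by $\tilde M_l$ and introduce $u_l$ with $u_l\le \tilde{M}_l\ln(1+G_l\bar p_l)$, obtained through $v_l^2\le \tilde{M}_l(1+G_l\bar p_l)$ together with the relative-entropy form $u_l\ln(u_l/\tilde M_l)$. The constraint $Q'_{\mathcal{X}}\tilde M_l^2+Q'_{\mathcal{X}}\tilde M_l\tau_{\mathcal S_l}\le u_l\ln(u_l/\tilde M_l)$ is then convex $\le$ concave-type in the exponential-cone sense. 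The hyperbolic SOC $\|\tilde M_l,\sqrt2 v_l,1+G_l\bar p_l\|\le 1+G_l\bar p_l+\tilde M_l$ encodes $2v_l^2\le$ a product of the two affine terms. The remaining nonconvex link $u_l\le v_l^2$ has a convex right-hand side. I will replace it by its tangent $2v_l^{(c)}v_l-{v_l^{(c)}}^2$ and add the slack $s_l\ge0$, so the subproblem is feasible from any starting point; the penalty $\Omega\sum s_l$ then drives the slacks to zero.

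\textbf{Step 4: convergence.} I will invoke the standard penalty-CCP argument. Each linearization is a global lower bound on a convex function, so the iterate at step $c$ is feasible for step $c+1$. The penalized objective is therefore monotonically nondecreasing and bounded, since the power and antenna constraints make the set compact. Its limit points are KKT points of the relaxed problem, which justifies calling the result an efficient solution.

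\textbf{Main obstacle.} The hard part is Step 3: verifying that the chain SOC / tangent / relative-entropy constraints is equivalent to the fronthaul constraint at the optimum. In particular, every auxiliary inequality must be tight, which again uses the monotonicity argument of Lemma~\ref{lemma1}. The chain must also handle $\tilde M_l\to0$ via the perspective convention $0\ln(u/0)$.
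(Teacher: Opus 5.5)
Your architecture matches the paper's sketch: SCA on the SINR constraint, multiplying the fronthaul constraint by $\tilde M_l$, and tangent-plus-slack feasible point pursuit. Your rotated-cone link $z_{l,k}^2\le\tilde M_l\rho_{l,k}$ is also a necessary ingredient that the displayed problem in fact omits.

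\textbf{Main gap (Step 3).} The gap sits exactly where you place the ``main obstacle'', and you assert the key property instead of checking it.
The function $u_l\ln(u_l/\tilde M_l)$ is the relative entropy, i.e., the perspective of the convex function $t\ln t$. It is therefore jointly \emph{convex}, not concave. The last constraint as printed is convex $\le$ convex, so the subproblem is not convex.

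It also fails to imply the fronthaul constraint. Under $u_l\le v_l^2\le\tilde M_l(1+G_l\bar p_l)$, its right-hand side can reach $\tilde M_l(1+G_l\bar p_l)\ln(1+G_l\bar p_l)$. This exceeds $\tilde M_l\ln(1+G_l\bar p_l)$, so the constraint admits antenna numbers with $Q'_{\mathcal X}(\tilde M_l+\tau_{\mathcal S_l})>\ln(1+G_l\bar p_l)$.

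Your description of $u_l$ is inconsistent as well. You first say $u_l\le\tilde M_l\ln(1+G_l\bar p_l)$, but your own chain gives $u_l\le\tilde M_l(1+G_l\bar p_l)$.
What that chain actually supports is the right-hand side $\tilde M_l\ln(u_l/\tilde M_l)$. This is the perspective of $\ln$, hence concave and exponential-cone representable. It satisfies $\tilde M_l\ln(u_l/\tilde M_l)\le\tilde M_l\ln(1+G_l\bar p_l)$, so it yields precisely the multiplied fronthaul constraint.
The printed constraint evidently has $u_l$ and $\tilde M_l$ swapped in front of the logarithm, and it also contains a stray ``$+$''. A correct proof must identify and use the corrected form rather than certify the printed one.

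\textbf{Secondary gaps.} Step 1 is under-justified, and repairing it shows why multiplying by $\tilde M_l$ is the key step.
With $\alpha_l=1$ inserted into \eqref{eq:WFmaxmin0:wireless_fronthaul_rate}, an AP with $\tilde M_l=0$ still needs $\bar p_l\ge(2^{Q_{\mathcal X}\tau_{\mathcal S_l}}-1)/G_l$. So ``nothing is lost'' is false whenever $\tau_{\mathcal S_l}>0$: fronthaul power is spent on an AP that contributes nothing.

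Moreover, that unmultiplied constraint is already convex (affine $\le$ concave), so the multiplication is not a convexification step. Its role is to replace the binary $\alpha_l$: at $\tilde M_l=0$ the product constraint collapses to $0\le 0$. It is also the very source of the nonconvexity that the SOC/tangent/slack machinery then has to handle. You should make this link between Steps 1 and 3 explicit.

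Finally, in Step 4 a fixed $\Omega$ only yields stationary points of the penalized problem. The slacks need not vanish; the standard penalty-CCP guarantee increases the weight. So KKT points of the relaxed problem follow only if $s_l\to 0$.
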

\begin{proof}
The proof is given in the journal version of this paper.  The proof relies on multiplying both sides of \eqref{eq:WFmaxmin0:wireless_fronthaul_rate} with $\tilde{M}_l$ and using successive convex approximation (SCA) and feasible point pursuit (FPP) to guarantee feasibility.    
\end{proof}
After solving \eqref{eq:WFmaxmin_opt1:problem} until convergence, the obtained antenna numbers are rounded as in \cite{energyjournal}. 

        \vspace{-3mm}
\section{Numerical Analysis}
        \vspace{-1mm}

We consider a square area of size  $1~\text{km}^2$ with a grid-type AP deployment. The cloud is located at $[0,480]$\,m, at the edge of the considered area to investigate unfairness among APs due to the wireless fronthaul limitation. In the fronthaul, we consider the cloud is equipped with a $64\times 4$ uniform planar array (UPA), and the APs are equipped with uniform linear arrays (ULAs).  We consider $L=25$, and  $M^{\rm ac} = 8$. We consider $5$\,GHz and $28$\,GHz carrier frequency for the access and fronthaul links, respectively. While the fronthaul channel is LOS dominant, uncorrelated Rayleigh fading is assumed in the access channel. The shadowing effect in the access channel is modeled as in \cite{ozlem_jsac}. The UEs are distributed uniformly in the considered area. We run $150$ Monte Carlo simulations for the single-UE case simulations, and $50$ Monte Carlo simulations for the multi-UE case simulations. We used CVX \cite{cvx} in solving \eqref{eq:maxmin_opt1_singleUE:problem} and \eqref{eq:WFmaxmin_opt1:problem}. We set $\Omega = 10^{-3}$, where convex-concave programming and feasible point pursuit algorithm iteratively solves \eqref{eq:WFmaxmin_opt1:problem} until normalized mean-squared error (NMSE) of $\upsilon^{(c)}$ in subsequent iterations is below $10^{-2}$.

\begin{table}[tb!]
\caption{Simulation parameters}
\centering
\begin{tabular}{|l|l|l|l|}
\hline
$M^{\rm frh}$,  $M_{c}$ & 64, 256 & $N_{\mathrm{bits}}$ & 12 \\ \hline
 $B^{\rm frh}$ &  $1000$\,MHz   &  $\Delta f$ & $30$\,kHz \\ \hline
 $P_f$  & $25$  &  $\tau_c$, $\tau_p$  & $260$, $6$   \\ \hline
\end{tabular}
\label{tab:simulation_params}
\vspace{-2mm}
\end{table}

\begin{figure}[tb]
        \vspace{-3mm}
    \centering
    \includegraphics[width=0.85\linewidth]{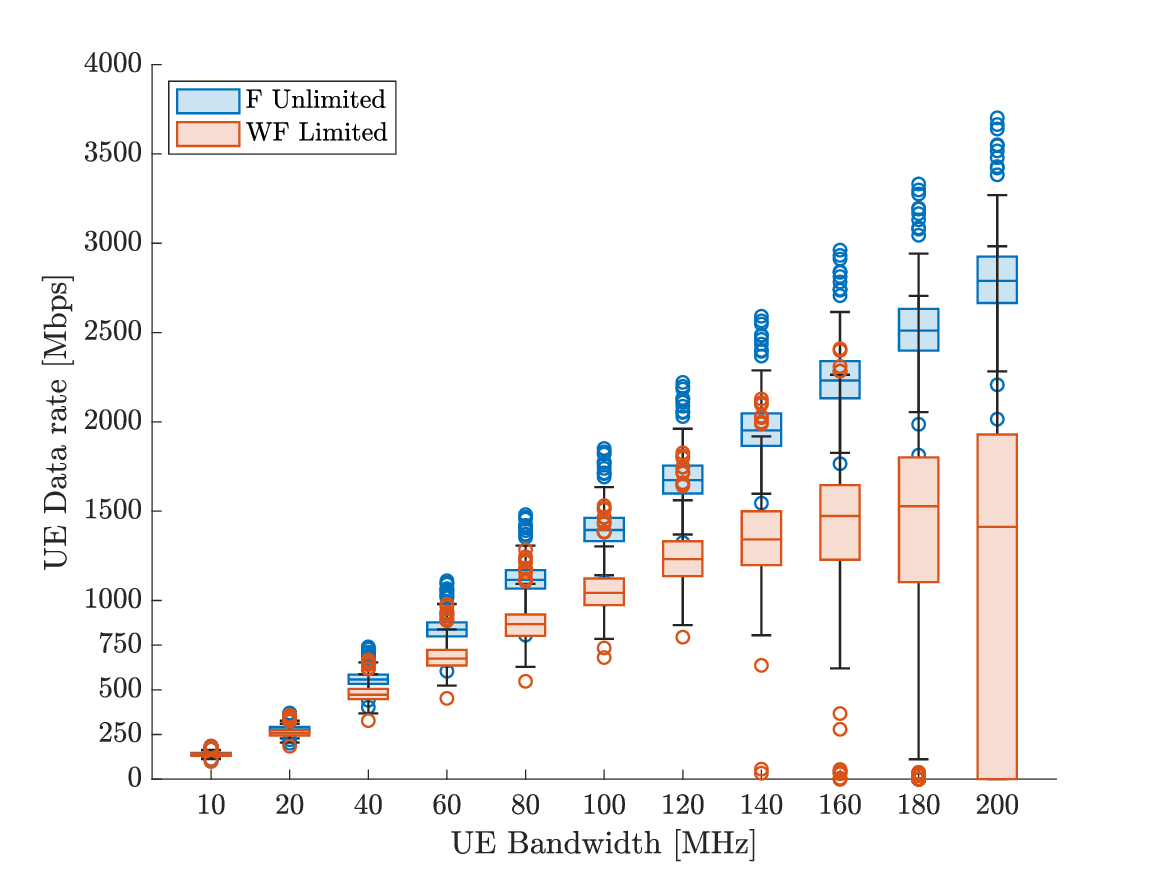}
        \vspace{-4mm}
    \caption{The box plot of UE data rate vs UE bandwidth for unlimited fronthaul and wireless fronthaul (WF) limitation cases.}
    \label{fig:singleUEBandwdith}
\end{figure}

Fig. \ref{fig:singleUEBandwdith} illustrates the UE data rate with a varying UE bandwidth, considering unlimited fronthaul and wireless fronthaul limitation (under split option $7.1$) in this single-UE case. The increments in UE bandwidth are implemented by increasing $N_{\rm used}$. $B^{\rm frh}$ is set to $200$\,MHz for this figure. At lower UE bandwidths, the wireless fronthaul limitation is not observed, demonstrating that almost all antennas at APs can be activated. As the UE bandwidth increases, the fronthaul constraints become harder to meet, resulting in fewer antenna activations. As shown in the figure, the overall UE data rate not only destabilizes but also decreases due to significant radio deactivation. Furthermore, the wireless fronthaul limitation also creates inequality based on the UE's location. The UEs closer to the cloud are less affected by wireless fronthaul limitations than those farther away. 

\begin{figure}[tb]
    \subfigure[]{
        \includegraphics[width=0.49\linewidth]{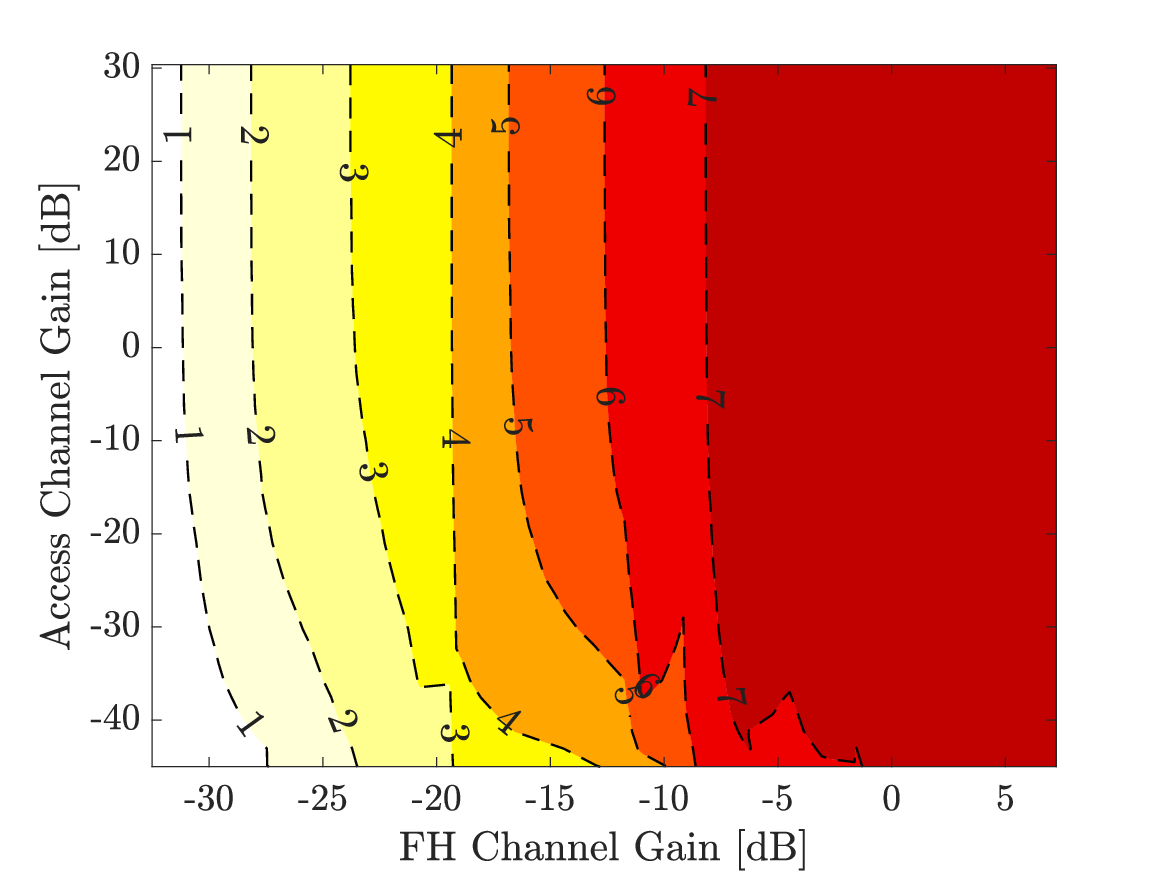} }
        \hspace{-0.5cm}
   \subfigure[]{
         \includegraphics[width=0.49\linewidth]{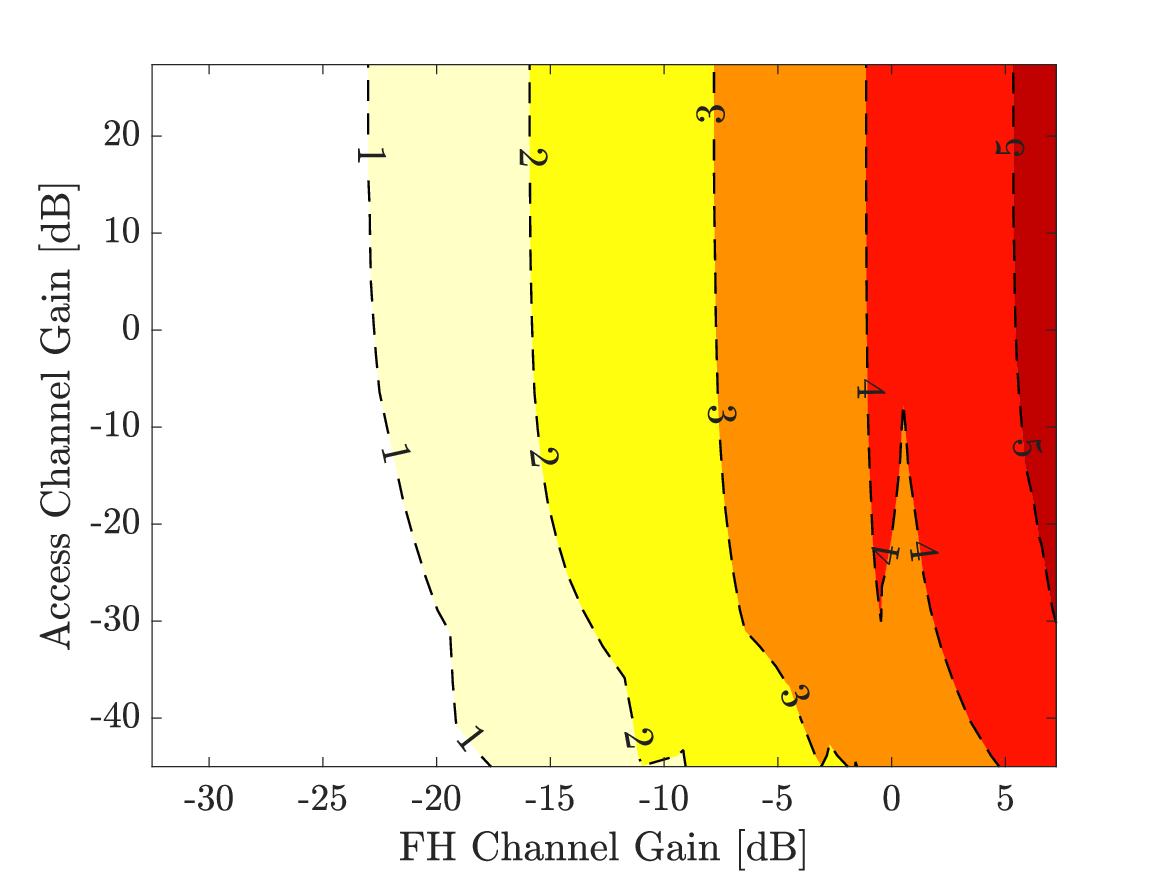}  }
                 \vspace{-5mm}
    \caption{Number of active antennas for varying fronthaul and access channel gains for UE bandwidth is equal to (a) 20\,MHz, (b)  40\,MHz.}
    \label{fig:AntennaActivation}
\end{figure}

Fig. \ref{fig:AntennaActivation} shows the active number of antennas under different AP-cloud and AP-UE channel gains. When the UE bandwidth is $20$\,MHz, the fronthaul limitation is less stringent, allowing higher antenna activation. Interestingly, the antenna activation is mainly affected by the fronthaul channel gain rather than the access channel gain, regardless of the UE bandwidth. Under severe access channel fading cases, fewer antennas are activated in the radios, demonstrating a resource saving for the APs having better channel gain with the UE. 

\begin{figure}[tb]
        \vspace{-5mm}
    \centering
    \includegraphics[width=0.82\linewidth]{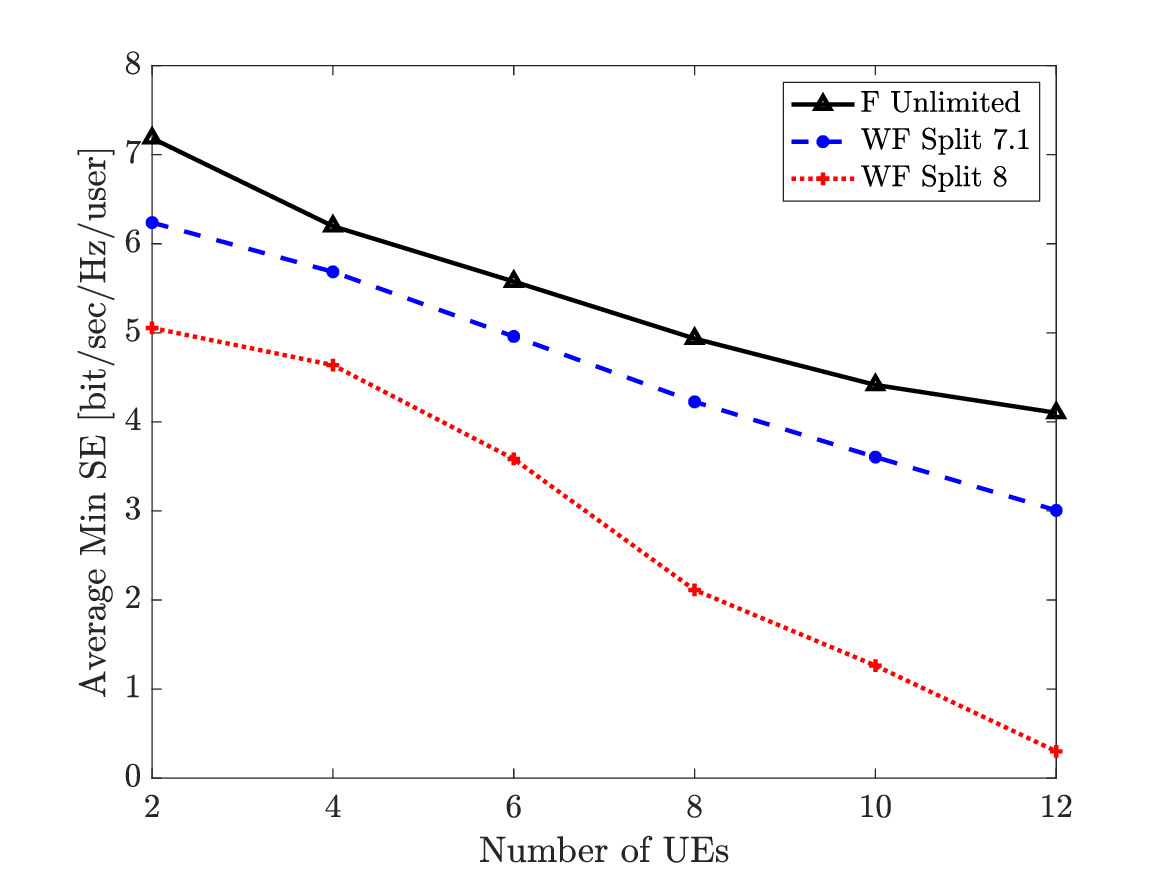}
           \vspace{-4mm}
    \caption{Average of minimum UE SE for varying number of UEs. F Unlimited, and WF indicate unlimited fronthaul and wireless fronthaul cases, respectively.}
    \label{fig:multiUEK}
    \vspace{-4mm}
\end{figure}

Fig. \ref{fig:multiUEK} shows the average minimum SE versus the number of UEs under different fronthaul configurations. The unlimited fronthaul (F unlimited) benchmark is obtained by the max-min-fairness algorithm proposed in \cite{interdonato2020local}, in which only the power allocation is optimized, assuming all antennas at APs are active. In contrast, WF schemes exhibit progressively lower SE as the number of UEs increases, reflecting the growing fronthaul bottleneck as more antennas are required to be activated to cancel out interference. Split 7.1 achieves moderate performance by balancing processing and fronthaul usage, while Split 8, which imposes a higher fronthaul burden, leads to severe performance degradation.

        \vspace{-1mm}
\section{Conclusion}
        \vspace{-1mm}

In this work, we analyzed the performance of a cell-free massive MIMO network with wireless fronthaul. Under realistic low-PHY functional splits, wireless fronthaul constraints the number of active antennas at the APs. We proposed joint access and fronthaul resource allocation algorithms that maximize the minimum UE spectral efficiency while ensuring fronthaul load constraints are satisfied. Our analysis shows that power allocation over the wireless fronthaul follows a modified water-filling scheme, where the water level depends jointly on the access and fronthaul channel gains. Severe fronthaul limitations not only reduce UE rates but also introduce spatial performance disparities related to the cloud location. 
           \vspace{-2mm}
\bibliographystyle{IEEEtran}
\bibliography{IEEEabrv,refs}

\begin{thebibliography}{10}
\providecommand{\url}[1]{#1}
\csname url@samestyle\endcsname
\providecommand{\newblock}{\relax}
\providecommand{\bibinfo}[2]{#2}
\providecommand{\BIBentrySTDinterwordspacing}{\spaceskip=0pt\relax}
\providecommand{\BIBentryALTinterwordstretchfactor}{4}
\providecommand{\BIBentryALTinterwordspacing}{\spaceskip=\fontdimen2\font plus
\BIBentryALTinterwordstretchfactor\fontdimen3\font minus \fontdimen4\font\relax}
\providecommand{\BIBforeignlanguage}[2]{{%
\expandafter\ifx\csname l@#1\endcsname\relax
\typeout{** WARNING: IEEEtran.bst: No hyphenation pattern has been}%
\typeout{** loaded for the language `#1'. Using the pattern for}%
\typeout{** the default language instead.}%
\else
\language=\csname l@#1\endcsname
\fi
#2}}
\providecommand{\BIBdecl}{\relax}
\BIBdecl

\bibitem{cfmMIMOOr}
H.~Q. Ngo, A.~Ashikhmin, H.~Yang, E.~G. Larsson, and T.~L. Marzetta, ``Cell-free massive {MIMO} versus small cells,'' \emph{IEEE Transactions on Wireless Communications}, vol.~16, no.~3, pp. 1834--1850, 2017.

\bibitem{larsen_survey_2019}
L.~M.~P. Larsen, A.~Checko, and H.~L. Christiansen, ``A {Survey} of the {Functional} {Splits} {Proposed} for {5G} {Mobile} {Crosshaul} {Networks},'' \emph{IEEE Communications Surveys \& Tutorials}, vol.~21, no.~1, pp. 146--172, 2019, conference Name: IEEE Communications Surveys \& Tutorials.

\bibitem{deployment_IDS}
O.~A. Topal, {\"O}.~T. Demir, E.~Björnson, and C.~Cavdar, ``A novel access point deployment framework for {mmWave} cell-free massive {MIMO} networks,'' \emph{IEEE Transactions on Wireless Communications}, vol.~24, no.~6, pp. 4581--4597, 2025.

\bibitem{cell-free-book}
\BIBentryALTinterwordspacing
{\"O}.~T. Demir, E.~Bj\"{o}rnson, and L.~Sanguinetti, ``Foundations of user-centric cell-free massive {MIMO},'' \emph{Foundations and Trends® in Signal Processing}, vol.~14, no. 3-4, pp. 162--472, 2021. [Online]. Available: \url{http://dx.doi.org/10.1561/2000000109}
\BIBentrySTDinterwordspacing

\bibitem{umurhan}
U.~Demirhan and A.~Alkhateeb, ``Enabling cell-free massive {MIMO} systems with wireless millimeter wave fronthaul,'' \emph{IEEE Transactions on Wireless Communications}, vol.~21, no.~11, pp. 9482--9496, 2022.

\bibitem{wireless_f}
S.~Elhoushy, M.~Ibrahim, and W.~Hamouda, ``Downlink performance of {CF} massive {MIMO} under wireless-based fronthaul network,'' \emph{IEEE Transactions on Communications}, vol.~71, no.~5, pp. 2632--2653, 2023.

\bibitem{energyjournal}
O.~A. Topal, {\"O}.~T. Demir, E.~Björnson, and C.~Cavdar, ``Unlocking the energy-saving potential in {O-RAN} cell-free massive {MIMO} by joint orchestration of radio, wireless fronthaul, and cloud resources,'' \emph{IEEE Transactions on Wireless Communications}, 2026, under review.

\bibitem{icton}
O.~A. Topal, {\"O}.~T. Demir, and C.~Cavdar, ``Rethinking energy efficiency in cell-free massive {MIMO}: The role of processing and optical fronthaul,'' \emph{arXiv preprint arXiv:2606.31412}, 2026.

\bibitem{AsilomarConf24}
O.~A. Topal, O.~T. Demir, E.~Bjornson, and C.~Cavdar, ``Energy-efficient cell-free massive {MIMO} with wireless fronthaul,'' in \emph{2024 58th Asilomar Conference on Signals, Systems, and Computers}, 2024, pp. 1591--1596.

\bibitem{neetu}
N.~R.R., O.~A. Topal, {\"O}.~T. Demir, E.~Björnson, C.~Cavdar, G.~Ghatak, and V.~A. Bohara, ``Uav-based cell-free massive {MIMO}: {J}oint activation and power optimization under fronthaul capacity limitations,'' \emph{IEEE Wireless Communications Letters}, pp. 1--1, 2025.

\bibitem{interdonato2020local}
G.~Interdonato, M.~Karlsson, E.~Bj{\"o}rnson, and E.~G. Larsson, ``Local partial zero-forcing precoding for cell-free massive {MIMO},'' \emph{IEEE Transactions on Wireless Communications}, vol.~19, no.~7, pp. 4758--4774, 2020.

\bibitem{bjornson2024introduction}
E.~Bj{\"o}rnson and {\"O}.~T. Demir, ``Introduction to multiple antenna communications and reconfigurable surfaces,'' \emph{Now Publishers, Inc.}, 2024.

\bibitem{ozlem_jsac}
{\"O}.~T. Demir, M.~Masoudi, E.~Björnson, and C.~Cavdar, ``Cell-free massive {MIMO} in {O-RAN}: Energy-aware joint orchestration of cloud, fronthaul, and radio resources,'' \emph{IEEE Journal on Selected Areas in Communications}, vol.~42, no.~2, pp. 356--372, 2024.

\bibitem{cvx}
M.~Grant and S.~Boyd, ``{CVX}: Matlab software for disciplined convex programming,'' \url{http://cvxr.com/cvx}, Apr. 2011.

\end{thebibliography}

\end{document}